\newtheorem{thm}{Theorem}[section]
\newtheorem*{thm*}{Theorem}
\newtheorem{cor}[thm]{Corollary}
\newtheorem{propo}[thm]{Proposition}
\theoremstyle{definition}
\newtheorem{defin}[thm]{Definition}
\newtheorem{rem}[thm]{Remark}
\numberwithin{equation}{section}
\def\M{\mathcal{M}}
\def\R{\mathcal{R}}
\def\RR{\mathbb{R}}
\def\Z2{\mathbb{Z}/2 \mathbb{Z}}
\def\Tp{\mathcal{T}_p} 
\def\Tx{\bigotimes^p T_x^*X}
\def\d{\mathrm{d}}
\def\qed{\hfill $\square$}
\def\proof{\noindent \textit{Proof: }}
\def\Diff{\protect \mathrm{Diff}_{p}}
\def\Ogx{\mathrm{O}_{g_x}}
\newcommand{\Hom}[3]{\ensuremath{{\protect \mbox{Hom}_{\scriptsize
#1}(#2,#3)}}}
\newcommand{\Ric}{\ensuremath{\protect{\mbox{Ric}}}} 
\renewcommand{\div}{\mathrm{div }\,}
\begin{document}

\newcommand{\dxdx}[2]{\begin{minipage}{2cm} $ \d x_{#1}\,\otimes \, \d x_{#2} $\end{minipage}}

\baselineskip=17pt

\title[Electromagnetic energy tensor in second order Lovelock gravities]{On the existence of the electromagnetic energy tensor in second order Lovelock gravities}
\author{R. Mart\'inez-Boh\'orquez}
\address{Departamento de Matem\'{a}ticas \\ Universidad de Extremadura \\ E-06071 Badajoz, Spain}
\email{raulmb@unex.es}
\thanks{The author has been partially supported by Junta de Extremadura and FEDER funds with projects IB18087, GR18001 and GR21055, and has been supported by the grant ``Plan 
Propio de Iniciación a la Investigación, Desarrollo Tecnológico e Innovación'' of Universidad de Extremadura.}

\date{\today}

\begin{abstract}
In this work we prove that there are no electromagnetic energy tensors in second order Lovelock gravities that verify properties equivalent to those of the Maxwell electromagnetic energy tensor in General Relativity.
\end{abstract}

\maketitle

\tableofcontents

\section{Introduction}

Electromagnetic fields on Lorentzian manifolds $(X,g)$ 
 can be represented by differential 2-forms $F$ on $X$ that verify Maxwell's equations. The electromagnetic energy tensor is then constructed with the data of both the metric $g$ and the 2-form $F$ in a natural way, meaning that the resulting tensor should not depend on the choice of coordinates on the manifold. 

The electromagnetic energy tensor in General Relativity is the well-known Maxwell tensor. In a local system of coordinates, it is given by the expression\footnote{The standard Einstein summation convention will be used throughout this text.}
\begin{equation}\label{EMT}
E_{ij}:=F_{ik} F_j^{\ k} - \frac{1}{4}F^{kl}F_{kl}g_{ij} \ .
\end{equation}

Many characterizations of the Maxwell tensor have been produced in the past; particularly relevant are those by D. Lovelock (\cite{LOVELOCK,LOVELOCK2,LOVELOCK3,LOVELOCK4}), I.M. Anderson (\cite{ANDERSON}) and D.B. Kerrighan (\cite{KERRIGHAN}). In recent times, a description involving the dependence on the unit of time-length has been given by Navarro-Sancho (\cite{NS_ELECTRO}):

\begin{thm*}[\cite{NS_ELECTRO}] The Maxwell energy tensor $E$ is the only 2-covariant tensor $T = T(g, F)$ naturally associated to a Lorentzian metric $g$ and a 2-form $F$, satisfying the following properties:
\begin{enumerate}
\item At any point, $F_x=0$ implies $T_x=0$.
\item If $\mathrm{d} F=0$, then $\div T= -i_{\partial F} F$, where $\partial F$ denotes the codifferential of $F$.
\item $T$ is independent of the unit of scale; that is, $T(\lambda^2 g, \lambda F) = T(g, F)$ for any $\lambda > 0$.
\end{enumerate}
\end{thm*}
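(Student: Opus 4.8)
The plan is the standard strategy for uniqueness results about natural tensors: use naturality to make $T$ a differential operator and cut it down to a polynomial of bounded order, identify the admissible monomials by orthogonal invariant theory, and finally impose the divergence identity to fix the remaining constants. In detail, naturality together with the Peetre--Slov\'ak theorem makes $T = T(g,F)$ a differential operator of some finite order in the jets of $g$ and $F$; passing to the Levi-Civita connection $\nabla$ of $g$ and applying the Reduction Theorem for linear connections and tensors, the value $T_x$ becomes an $\mathrm{O}(g_x)$-equivariant smooth function of the curvature tensor $R$ of $g$, its covariant derivatives $\nabla R,\nabla^2 R,\dots$, the form $F$, and its covariant derivatives $\nabla F,\nabla^2 F,\dots$, with $g_x$ available to raise and contract indices. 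Hypothesis (3) is exactly a homogeneity condition for the simultaneous rescaling $g\mapsto\lambda^2 g$, $F\mapsto\lambda F$: since the Levi-Civita connection is scale invariant, one has $\nabla^s R_{\,\cdots}\mapsto\lambda^2\nabla^s R_{\,\cdots}$ and $\nabla^s F_{\,\cdots}\mapsto\lambda\,\nabla^s F_{\,\cdots}$ once all indices are lowered, $g^{ij}\mapsto\lambda^{-2}g^{ij}$, and $T_{ij}\mapsto T_{ij}$. The Homogeneous Function Theorem then converts this into polynomiality with a bound on the degree, so by the First Fundamental Theorem of $\mathrm{O}$ the tensor $T_{ij}$ is a linear combination of complete $g$-contractions of tensor products of the $\nabla^s R$'s, the $\nabla^s F$'s, $g$ and $g^{-1}$, with two free indices.

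A weight-and-index count now finishes the reduction. For a monomial with $p$ curvature factors, $q$ factors that are (possibly differentiated) copies of $F$, $d$ covariant derivatives in total, $m$ factors $g^{ij}$ and $c$ factors $g_{ij}$, invariance under (3) forces $2p+q-2m+2c=0$, while exactly two free indices forces $4p+2q+d+2c-2m=2$; subtracting gives $2p+q+d=2$, and the first relation shows $q$ is even. If $q=0$ then $(p,d)=(1,0)$ (the option $d=2$ being vacuous), and the only monomials are $\Ric_{ij}$ and $(\text{scalar curvature})\,g_{ij}$, both excluded by (1) because they are nonzero at points where $F_x=0$; hence $q\ge 2$, which forces $q=2$ and $p=d=0$. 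So $T$ is built from two undifferentiated copies of $F$; the contractions factoring through $g^{kl}F_{kl}=0$ drop out, and the only survivors are $F_{ik}F_j{}^{k}$ and $F^{kl}F_{kl}\,g_{ij}$ (both symmetric, so $T$ is automatically symmetric). Hence $T_{ij}=\alpha\,F_{ik}F_j{}^{k}+\beta\,F^{kl}F_{kl}\,g_{ij}$ for real constants $\alpha,\beta$ (here $\dim X=4$).

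Finally I would impose (2). Assume $\d F=0$, i.e. $\nabla_a F_{bj}+\nabla_b F_{ja}+\nabla_j F_{ab}=0$; contracting with $F^{ab}$ gives $F^{ab}\nabla_a F_{jb}=\tfrac14\nabla_j(F^{kl}F_{kl})$, and metric compatibility gives $F^{ab}\nabla_j F_{ab}=\tfrac12\nabla_j(F^{kl}F_{kl})$. Therefore $\nabla^i T_{ij}=\alpha\,(\nabla^i F_{ik})\,F_j{}^{k}+\bigl(\tfrac{\alpha}{4}+\beta\bigr)\nabla_j(F^{kl}F_{kl})$. On the other hand $-i_{\partial F}F$ equals, up to sign, $(\nabla^i F_{ik})\,F_j{}^{k}$ and carries no term in $\nabla(F^{kl}F_{kl})$; since $(\nabla^i F_{ik})F_j{}^{k}$ and $\nabla_j(F^{kl}F_{kl})$ are independent as natural expressions in the $1$-jet of a closed $F$, matching the two sides forces $\alpha=\pm1$ and $\beta=-\alpha/4$, so that $T$ is the Maxwell tensor $E$, the overall sign being dictated by the conventions for $\partial F$ and for the divergence used in (2). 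A direct verification that $E$ itself satisfies (1)--(3) completes the argument.

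I expect the genuine difficulty to sit entirely in the first step --- turning naturality together with the scale invariance (3) into polynomiality of bounded order --- since this is where the heavy input (Peetre--Slov\'ak, the Reduction Theorem, the Homogeneous Function Theorem) is really needed; afterwards the invariant-theoretic enumeration and the divergence computation are routine. One point to watch is that the reduction is equivariant under the full orthogonal group $\mathrm{O}(g_x)$ rather than just $\mathrm{SO}(g_x)$, so that no contractions with the volume form can appear; this is precisely what keeps the surviving list of monomials so short and makes the final matching of constants possible.
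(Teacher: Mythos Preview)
The paper does not contain a proof of this statement: it is quoted in the Introduction as a result of Navarro--Sancho \cite{NS_ELECTRO}, and serves only as motivation for the paper's own Theorem~\ref{MainTheorem}. So there is no ``paper's proof'' to compare against directly.

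That said, your proposal is correct and is precisely the methodology the paper deploys for its own main result. The paper packages your Peetre--Slov\'ak\,/\,Reduction\,/\,Homogeneous Function steps into the single classification Theorem~\ref{MainThmMetric}, which reduces natural tensors of a given relative weight to $\Ogx$-equivariant linear maps indexed by non-negative integer solutions of a Diophantine equation; for the weight $(a,w)=(1,0)$ relevant here this equation reads $2d_2+\cdots+c_0+2c_1+\cdots=2$, whose only solutions are $d_2=1$, $c_1=1$, and $c_0=2$. The first is killed by hypothesis (1) exactly as you argue, the second gives no invariant maps by the odd-index parity observation in Theorem~\ref{MainTheoremO}, and the third yields precisely your two monomials $F_{ik}F_j{}^{k}$ and $F^{kl}F_{kl}\,g_{ij}$. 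Your divergence computation and the matching of constants are then routine and correct; the independence of $(\nabla^i F_{ik})F_j{}^{k}$ and $\nabla_j(F^{kl}F_{kl})$ for closed $F$ can be checked on an explicit example (take a closed $F$ with $\partial F=0$ but $|F|^2$ nonconstant), in the same spirit as the explicit example the paper uses to verify linear independence in the proof of Theorem~\ref{MainTheorem}.
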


These properties possess physical meaning: condition $(1)$ forces the dependence of the electromagnetic energy tensor on the existence of electromagnetic field, and condition $(2)$ is due to the Lorentz force law, assuring that the total energy-momentum tensor (the sum of the contributions of matter and electromagnetism) is zero, thus preserving the principle of conservation of energy.

Condition $(3)$ differentiates this characterization of the Maxwell tensor from the aforementioned classical results. It is an homogeneity condition that originates from the fact that the electromagnetic energy tensor is coupled with the Einstein tensor, and thus the dimensional units should be coherent (see \cite{NS_ELECTRO} for further details).

In this work, we study whether an electromagnetic energy tensor can be defined in second order Lovelock gravities, verifying properties equivalent to the above. 

Recall that Lovelock gravities are generalizations of General Relativity to higher dimensions, where other tensors (known as the Lovelock tensors $L_k$) of second order in the derivatives of the metric appear as possible candidates to describe the distribution of matter in dimensions greater than $2k$. These tensors qualitatively differ from the Einstein tensor in their degree of homogeneity: whereas the Einstein tensor is linear in the second derivatives of the metric, higher order Lovelock tensors  are homogeneous polynomials of degree $k$ in the second derivatives of the metric. 

Precisely, a change of the unit of time-length by a factor of a non-zero $\lambda \in \RR$ (i.e. a constant rescaling of the metric $g$ by a factor $\lambda^2$) modifies the $k$-th Lovelock tensor $L_k$ by a factor of $\lambda^{2-2k}$:
$$L_k(\lambda^2 g)=\lambda^{2-2k} L_k(g) \ .$$
In other words, the $k$-th Lovelock tensor is homogeneous of weight $2-2k$.

 The simplest Lovelock tensor after the metric tensor itself ($L_0$) and the Einstein tensor ($L_1$) is the Lovelock tensor $L_2$ of order $2$, also known as the Lanczos-Lovelock tensor:
$$(L_2)_{ij} = - R_{iabc} R_{j}^{\ abc} + 2\Ric^{ab} R_{iajb} + 2\Ric_{ia} \Ric^a_{\ j} - r \Ric_{ij} + \frac{1}{4} \left( R_{abcd} R^{abcd} -4 \Ric_{ab} \Ric^{ab} +r^2  \right) g_{ij} \ ,$$ 

where $R$ denotes the Riemann-Christoffel curvature tensor, $\Ric$ denotes the Ricci curvature tensor and $r$ denotes the scalar curvature.

We prove (Theorem \ref{MainTheorem} and its corollary) that there exists no natural 2-tensor associated to metrics and 2-forms that verifies the following properties: 
\begin{enumerate}
\item If $F_x=0$, then $T_x=0$.
\item $\div T = -i_{\partial F} F$ whenever $F$ is closed, where $\partial F$ denotes the codifferential of $F$.
\item The dimensional units of $T$ are coherent; that is to say, that they coincide with those of the Lanczos-Lovelock tensor $L_2$. In mathematical terms, it means that $T$ should be homogeneous of relative weight $(1,-2)$, that is, for any non-zero $\lambda \in \RR$ it holds that\footnote{Recall that a change of the unit of time-length by a factor of $\lambda$ also modifies the electromagnetic field $F$ by a factor of $\lambda$ (see \cite{NS_ELECTRO}).}
$$T(\lambda^2 g, \lambda F)=\lambda^{-2} T(g,F) \ .$$
\end{enumerate}

%

Therefore, an electromagnetic energy tensor $T$ which can be coupled to the Lanzcos-Lovelock tensor that verifies the corresponding properties to the Maxwell tensor in General Relativity does not exist.

In order to prove this statement, we will make use of several deep results of invariant theory, which can be found in the essential monograph by Kol\'a\v{r}-Michor-Slov\'ak (\cite{KMSBOOK}) and were rewritten to a more modern language by Navarro-Sancho (\cite{N_THESIS,NS_EINSTEIN}). Invariant theory has long been utilized in the research of natural constructions and field equations in Physics, on of the first examples being deeply rooted in the origins of General Relativity, where the fundamental characterization of the Einstein tensor was obtained by Vermeil \cite{VERMEIL} (and independently later by Cartan \cite{CARTAN}). Since then, it has been determinant in several characterizations of electromagnetic energy tensors as explained above, as well as in quantum field theory (\cite{KMCMM,KMM}). 


\section{Preliminaries}

\subsection{Natural tensors}

Let $X$ be a smooth $n$-manifold. Let $\M$ be the sheaf of pseudo-riemannian metrics of fixed signature $(s_+,s_-)$ over $X$, let $\Lambda^2$ be the sheaf of 2-forms over $X$ and let $\Tp$ be the sheaf of $p$-covariant tensors over $X$.

\begin{defin}
A natural tensor $T: \M \times \Lambda^2 \rightarrow \Tp$ (associated to metrics and 2-forms) is a regular\footnote{In this context, regularity is a technical condition that refers to the smoothness of the morphism, and its rigorous definition has been omitted for simplicity. The interested reader can check \cite{KMSBOOK, N_THESIS}.} morphism of sheaves such that, for any local diffeomorphism $\tau:U\rightarrow V$, it holds that
$$ T(\tau^* g,\tau^* F)=\tau^*T(g,F) \, . $$
\end{defin}

\begin{defin}
Let $a,w\in \mathbb{Z}$. We say that a natural tensor $T:\M \times \Lambda^2 \rightarrow \Tp$ is homogeneous of relative weight $(a,w)$ if, for all non-zero $\lambda\in \RR$, it holds that:
$$T(\lambda^2 g, \lambda^a F)=\lambda^w T(g,F) \ .
$$
\end{defin}

\begin{defin} 
Let $m\geq 1$ be an integer. The  space $N_m\,$ of normal tensors of order $m$ at $x$ is the vector subspace of $m+2$-covariant tensors $T$ at $x$ satisfying the
following symmetries:
\begin{enumerate}
\item they are symmetric in the first two and the last $m$ covariant indices:
\begin{equation}
T_{ijk_1\ldots k_m} =T_{jik_1\ldots k_m} \ , \quad  T_{ijk_1\ldots k_m} = T_{ijk_{\sigma(1)} \ldots k_{\sigma(m)}} \ , \quad
\forall \ \sigma \in S_m\,;
\end{equation}

\item the cyclic sum of the last $\,m+1\,$ covariant indices is zero:
\begin{equation}
T_{ijk_1\ldots k_m} + T_{ik_mjk_1\ldots k_{m-1}} + \ldots + T_{ik_1\ldots k_m j} = 0.
\end{equation} 
\end{enumerate}
\end{defin} 

For $m=0$, the space $N_0$ is defined as the set of pseudo-riemannian metrics of the fixed signature $(s_+,s_-)$ at $x$, and for $m=1$ it holds that $N_1=0$ due to the symmetries.

These spaces recover the symmetries of a metric tensor and its partial derivatives at a point in normal coordinates. In the same vein, the symmetries of a $2$-form and its derivatives are given by the spaces $V_r:=\Lambda^2 T_x^* X \otimes S^r T_x^* X$.

\begin{thm}[\cite{N_THESIS}]\label{MainThmMetric}
Let $X$ be a smooth manifold of dimension $n$, let $\M$ denote the sheaves of pseudo-riemannian metrics of fixed signature $(s_+,s_-)$, let $\Lambda^2$ be the sheaf of 2-forms over $X$ and let $\Tp$ be the sheaf of $p$-covariant tensors. Let $a, w \in \mathbb{Z}$.


If we fix a point $x \in X$ and a pseudo-riemannian metric $g_{x}$ of  fixed signature $(s_+,s_-)$ at $x$, there exists a $\mathbb{R}$-linear isomorphism

$$
\begin{CD}
\left\{
\begin{array}{c}
 \,  \text{Natural tensors } \ \\
 \M \times \Lambda^2 \longrightarrow \Tp \ \\
 \text{homogeneous of relative weight } (a,w) \
\end{array} \right\} \\
\Big\Vert \\
\bigoplus  \limits_{d_i, c_j} \Hom{\Ogx}{S^{d_2}N_2 \otimes \ldots \otimes S^{d_r}N_r \otimes S^{c_0}V_0 \otimes \ldots \otimes S^{c_s}V_s}{\Tx} \ ,
\end{CD}
$$
where $d_2, \ldots , d_r$ run over the non-negative integer solutions of the equation
\[
2 d_2 + \ldots + r d_r + (2-a)c_0 + \ldots + (2+s-a)c_s =p-w \ .
\] and where $\Ogx :=  \{  \mathrm{d}_{x} \tau \colon \tau \in \Diff: \tau_{*,x} g_{x}=g_{x} \}$.
\end{thm}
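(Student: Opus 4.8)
The plan is to prove this classification in the spirit of the reduction theorems of the theory of natural bundles, in four stages: bounding the order of $T$, reducing the jets of $(g,F)$ to invariant data, reducing the residual structure group to $\Ogx$, and using homogeneity to force polynomial dependence. First I would show that $T$ has finite order: by the Peetre--Slov\'ak theorem a regular natural morphism $T\colon\M\times\Lambda^2\to\Tp$ is, near each point, given by a smooth map on the $M$-jets of $(g,F)$ for some finite $M$, and homogeneity pins this $M$ down in terms of $p,w,a$ — the rescaling $(g,F)\mapsto(\lambda^2g,\lambda^aF)$, together with the homothetic change of coordinates it induces, acts on the order-$m$ part of the jet with a weight growing linearly in $m$, so by the Homogeneous Function Theorem the dependence of $T$ on that part is polynomial of a degree that decreases with $m$ and is eventually incompatible with the fixed relative weight (for $a\le1$, which covers the application $a=1$). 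Hence $T$ has some finite order $M=M(p,w,a)$.

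Next I would reduce to normal tensors. Choosing $g$-normal coordinates centred at $x$, the $M$-jet of $g$ at $x$ becomes equivalent, modulo the group of local diffeomorphisms fixing $x$, to the tuple $(N_2,\dots,N_r)$ of normal tensors (the Reduction Theorem for metrics), while the $M$-jet of $F$ is encoded — after replacing partial by Levi-Civita covariant derivatives and symmetrizing, the non-symmetric parts being curvature corrections already carried by the $N_i$ — by the tuple $(V_0,\dots,V_s)$, with $r,s$ controlled by $M$. This normalization absorbs the higher-order part of the jet group, leaving the linear part $\mathrm{Gl}(T_xX)$ acting; fixing the value $g_x$ cuts this group down to the isotropy group $\Ogx$. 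Thus, once $x$ and $g_x$ are fixed, $T$ corresponds bijectively and $\RR$-linearly to an $\Ogx$-equivariant smooth map $\phi\colon\bigoplus_{i=2}^{r}N_i\oplus\bigoplus_{j=0}^{s}V_j\to\Tx$.

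It then remains to classify such $\phi$. Assign to each factor the weight ``covariant degree minus the exponent of the $\lambda$-rescaling'': $N_i$ has weight $i$, $V_j$ has weight $2+j-a$, and $\Tx$ has weight $p-w$; since $\Ogx$ contains no nontrivial homotheties this $\RR^{+}$-rescaling is a genuine further symmetry of $\phi$, and the natural construction respects this grading. The Homogeneous Function Theorem forces $\phi$ to be a finite sum of its homogeneous-multidegree pieces, each an $\Ogx$-equivariant linear map in $\Hom{\Ogx}{S^{d_2}N_2\otimes\cdots\otimes S^{d_r}N_r\otimes S^{c_0}V_0\otimes\cdots\otimes S^{c_s}V_s}{\Tx}$ whose multidegree satisfies $2d_2+\cdots+rd_r+(2-a)c_0+\cdots+(2+s-a)c_s=p-w$. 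Conversely, read back through normal coordinates, any such homomorphism extends to a homogeneous natural tensor, and the two assignments are mutually inverse; summing over the admissible multidegrees yields the stated isomorphism.

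I expect the technical heart to be the second and third stages: checking that, modulo the appropriate jet group, the $M$-jet of the pair $(g,F)$ is parametrized \emph{freely} by $\bigoplus N_i\oplus\bigoplus V_j$ with residual symmetry exactly $\Ogx$ — in particular disentangling the covariant-derivative reduction of $F$ from the curvature data of $g$ and verifying that the Bianchi and Ricci identities leave no surviving relations — together with the bookkeeping of weights under rescaling. This is essentially the content of the reduction and homogeneous function theorems quoted from \cite{KMSBOOK} and \cite{N_THESIS}, on which the argument rests.
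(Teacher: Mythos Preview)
The paper does not contain its own proof of this statement: Theorem~\ref{MainThmMetric} is quoted as a preliminary result from \cite{N_THESIS} and used as a black box in the proof of Theorem~\ref{MainTheorem}. Your sketch follows exactly the standard route by which such classification theorems are established in \cite{KMSBOOK} and \cite{N_THESIS} --- Peetre--Slov\'ak for finite order, reduction to normal tensors via normal coordinates, reduction of the residual structure group to $\Ogx$, and the Homogeneous Function Theorem to force polynomiality and the weight equation --- so your approach is the expected one and is consistent with what the cited reference does.

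One small caveat worth flagging: you correctly note the restriction ``for $a\le 1$'' when invoking the Homogeneous Function Theorem, since the weight of $V_j$ is $2+j-a$ and must be positive for the argument to bound the degree in $c_0$. The theorem as stated in the paper allows arbitrary $a\in\mathbb{Z}$, and for $a\ge 2$ the weight of $V_0$ is nonpositive, so the diophantine equation can have infinitely many solutions and the Homogeneous Function Theorem does not directly yield polynomial dependence in that variable. This is not a defect in your argument --- it is a genuine limitation of the method, and the statement in \cite{N_THESIS} presumably carries the same implicit hypothesis (or handles the degenerate cases separately); in any case the application in the paper has $a=1$, where your argument goes through cleanly.
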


\subsection{Invariant theory of the orthogonal group}

Let $\,V\,$ be a real vector space of finite dimension $\,n$, let $g$ be a non-degenerate symmetric bilinear form on $V$ and let $\,\mathrm{O}(n,\RR) \,$ be the orthogonal group: the real Lie group of  $\mathbb{R}$-linear automorphisms $V \rightarrow V$ that preserve $g$. 

The First Fundamental Theorem of the orthogonal group $\,\mathrm{O}(n,\RR) \,$ allows us to compute $\,\mathrm{O}(n,\RR) \,$-equivariant linear maps:

\begin{thm}[\cite{GW}]\label{MainTheoremO} 
The real vector space $\,\mathrm{Hom}_{\mathrm{O}(n,\RR)}\left( V \otimes \stackrel{p}{\ldots} \otimes V  \, , \, \RR \right) \,$
of invariant linear forms on $\, V \otimes \ldots \otimes V\,$ is null if $p$ is odd, whereas if $p$ is even it is spanned by 
$$ g_\sigma ((e_1 , \ldots , e_p)) := g (e_{\sigma(1)}, e_{\sigma(2)}) \ldots g (e_{\sigma(p-1)}, e_{\sigma(p)}) \ ,   $$
where $\sigma \in S_p .$
\end{thm}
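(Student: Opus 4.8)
The plan is to deduce this \emph{tensor} form of the First Fundamental Theorem from the classical \emph{vector} form by a polarization argument (this is the route taken in \cite{GW}). First I would record the dictionary: a linear form $\Phi\in\mathrm{Hom}_{\mathrm{O}(n,\RR)}(V\otimes\stackrel{p}{\ldots}\otimes V,\RR)$ is the same datum as an $\mathrm{O}(n,\RR)$-invariant element of $(V^{*})^{\otimes p}$, equivalently an $\mathrm{O}(n,\RR)$-invariant $p$-linear form $\Phi\colon V\times\cdots\times V\to\RR$; and such a $\Phi$ is, in particular, a polynomial function on $V^{\oplus p}=V\times\cdots\times V$ that is homogeneous of multidegree $(1,\dots,1)$.

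The engine of the proof is the vector First Fundamental Theorem for the orthogonal group: the $\RR$-algebra of $\mathrm{O}(n,\RR)$-invariant polynomial functions on $V^{\oplus m}$ is generated by the quadratic functions $q_{ij}(v_1,\dots,v_m):=g(v_i,v_j)$, $1\le i\le j\le m$ (for positive-definite $g$ this is Weyl's classical theorem; for indefinite $g$ it reduces to the definite case by complexification together with a Zariski-density argument). Granting this and applying it to our $\Phi$ with $m=p$: since each $q_{ij}$ is multihomogeneous of multidegree $e_i+e_j$, a polynomial in the $q_{ij}$ splits into multihomogeneous pieces, and the piece of multidegree $(1,\dots,1)$ --- which is $\Phi$ itself --- is a linear combination of products $q_{i_1j_1}\cdots q_{i_kj_k}$ whose multidegrees sum to $(1,\dots,1)$. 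Such a product uses no $q_{ii}$, and the pairs $\{i_1,j_1\},\dots,\{i_k,j_k\}$ must form a perfect matching of $\{1,\dots,p\}$; hence $p=2k$ is forced to be even, so for odd $p$ we get $\Phi=0$. For $p=2k$, reading each admissible product back as an element of $(V^{*})^{\otimes p}$ produces exactly one of the tensors $g_\sigma$ --- namely $g_\sigma$ for any $\sigma\in S_p$ realising the matching, several $\sigma$ giving the same tensor, which is harmless for a spanning statement --- while conversely every $g_\sigma$ is $\mathrm{O}(n,\RR)$-invariant because $g$ is. Therefore the $g_\sigma$, $\sigma\in S_p$, span $\mathrm{Hom}_{\mathrm{O}(n,\RR)}(V^{\otimes p},\RR)$, which is the assertion; the relations among them (the Second Fundamental Theorem) play no role here.

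The main obstacle is the vector First Fundamental Theorem invoked above. The easy range $m\le n$ is straightforward because $\mathrm{O}(n,\RR)$ acts on generic $m$-tuples of vectors with the Gram matrix as a complete invariant, so an invariant polynomial descends to a polynomial in the $q_{ij}$; but when $m>n$ the Gram matrix has rank at most $n$, not every symmetric matrix of $q_{ij}$-values is attained, and one must force polynomiality through the resulting syzygies by a Capelli-type limiting argument --- this is the genuinely hard input, and, as the paper does in citing \cite{GW}, I would take it as known rather than reprove it. A small but essential point to flag is that $\mathrm{O}(n,\RR)$ here is the \emph{full} orthogonal group (matching the paper's $\Ogx=\{\mathrm{d}_x\tau\colon\tau_{*,x}g_x=g_x\}$, since diffeomorphisms may reverse orientation), so the volume-form contraction $\epsilon$ is unavailable; this is precisely why odd $p$ yields nothing and why no skew ``pseudo-tensor'' invariants occur in the list.
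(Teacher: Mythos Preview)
The paper does not give its own proof of this statement: Theorem~\ref{MainTheoremO} is quoted verbatim from \cite{GW} and used as a black box, so there is nothing in the paper to compare your argument against. Your sketch---deduce the tensor statement by polarizing the classical vector First Fundamental Theorem for $\mathrm{O}(n,\RR)$, isolating the multidegree-$(1,\dots,1)$ component to force a perfect matching of indices---is correct and is precisely the route taken in the cited reference, so your proposal is consistent with what the paper relies on.
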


We will also require the following facts:

\begin{propo}[\cite{COLLOQUIUM}]\label{ProposicionInvariantes} Let $E$ and $F$ be (algebraic) linear representations of $\mathrm{O}(n,\RR)$.
\begin{itemize}
\item \label{inv1} There exists a linear isomorphism $\mathrm{Hom}_{\mathrm{O}(n,\RR)} (E , F) = \mathrm{Hom}_{\mathrm{O}(n,\RR)} (E \otimes F^* , \mathbb{R}) $.

\item \label{inv2}
If $W\subseteq E$ is a sub-representation, then any equivariant linear map $W \to F$ is the restriction of an equivariant linear map $E \to F$.
\end{itemize}
\end{propo}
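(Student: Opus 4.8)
The plan is to establish the two items separately; both are formal consequences of standard representation-theoretic facts, the essential one being the complete reducibility of the finite-dimensional algebraic representations of $\mathrm{O}(n,\RR)$. This holds because, in characteristic zero, $\mathrm{O}(n,\RR)$ is a (real form of a) reductive algebraic group, hence linearly reductive: all its finite-dimensional algebraic representations are direct sums of irreducibles; in the positive-definite case it is moreover compact, and complete reducibility then follows from Weyl's unitary trick (averaging against Haar measure).

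\emph{First item.} I would use the tensor--hom adjunction. Since $E$ and $F$ are finite-dimensional, there are canonical linear isomorphisms $\mathrm{Hom}(E,F)\cong E^*\otimes F$ and $\mathrm{Hom}(E\otimes F^*,\RR)=(E\otimes F^*)^*\cong E^*\otimes F^{**}\cong E^*\otimes F$, the last identification using finite-dimensionality of $F$ so that the natural map $F\to F^{**}$ is an isomorphism. Composing yields a canonical isomorphism $\mathrm{Hom}(E,F)\cong\mathrm{Hom}(E\otimes F^*,\RR)$; explicitly, a linear map $\phi\colon E\to F$ is sent to the linear form $e\otimes\xi\mapsto\xi(\phi(e))$. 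Because this isomorphism is assembled out of the evaluation pairing $F^*\otimes F\to\RR$ and the coevaluation (the canonical element of $E^*\otimes E$), both of which are $\mathrm{O}(n,\RR)$-invariant, the isomorphism is itself $\mathrm{O}(n,\RR)$-equivariant for the natural actions on both sides. Restricting an equivariant isomorphism to the subspaces of invariants gives an isomorphism between those subspaces, which is precisely $\mathrm{Hom}_{\mathrm{O}(n,\RR)}(E,F)\cong\mathrm{Hom}_{\mathrm{O}(n,\RR)}(E\otimes F^*,\RR)$.

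\emph{Second item.} Here the key input is complete reducibility: every finite-dimensional algebraic representation of $\mathrm{O}(n,\RR)$ is a direct sum of irreducibles, so the subrepresentation $W\subseteq E$ admits an $\mathrm{O}(n,\RR)$-invariant complement $W'$, giving $E=W\oplus W'$ as representations (equivalently, average any linear projector $E\to W$ over the group when it is compact, or invoke semisimplicity of the module category in general). Given an equivariant linear map $\phi\colon W\to F$, define $\widetilde\phi\colon E\to F$ by $\widetilde\phi(w+w')=\phi(w)$ for $w\in W$, $w'\in W'$. This map is linear, restricts to $\phi$ on $W$, and is equivariant because the action of $\mathrm{O}(n,\RR)$ on $E$ is the direct sum of its actions on the stable subspaces $W$ and $W'$. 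Hence $\phi$ is the restriction of the equivariant map $\widetilde\phi$ defined on all of $E$.

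I do not anticipate a real obstacle here: the only points requiring care are (i) verifying that the adjunction isomorphism in the first item is natural in the representations — i.e.\ commutes with the group action, rather than being merely a non-canonical isomorphism of vector spaces — which is immediate from the invariance of evaluation and coevaluation, and (ii) having complete reducibility of $\mathrm{O}(n,\RR)$ available, which is exactly what guarantees the invariant complement needed in the second item.
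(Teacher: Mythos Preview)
Your argument is correct and is the standard proof of these facts. Note, however, that the paper does not actually prove this proposition: it is quoted as a known result from \cite{COLLOQUIUM}, so there is no ``paper's own proof'' to compare against. Your proof is precisely the expected one --- tensor--hom adjunction together with naturality for the first item, and complete reducibility of algebraic representations of the reductive group $\mathrm{O}(n,\RR)$ (which, as you observe, is the key point needed to handle indefinite signatures where compactness is unavailable) for the second item.
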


\section{Main Theorem}

For the remainder of this section, let $X$ be a smooth manifold of dimension $n\geq 5$.

\begin{thm}\label{MainTheorem}
There are no non-zero natural 2-tensors $T(g,F)$ associated to metrics and 2-forms that verify the following conditions:
\begin{itemize}
\item[(a)] If $F_x =0$, then $T_x=0$.
\item[(b)] $T$ is homogeneous of relative weight $(1,-2)$.
\item[(c)] $T$ is divergence-free.
\end{itemize}
\end{thm}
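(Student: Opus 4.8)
The plan is to reduce the classification problem to a finite-dimensional computation in the invariant theory of the orthogonal group, using Theorem~\ref{MainThmMetric}, and then show that the resulting space of candidates is annihilated by the divergence constraint.

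First I would apply Theorem~\ref{MainThmMetric} with $p=2$, $a=1$, $w=-2$ to parametrize all natural $2$-tensors $T(g,F)$ homogeneous of relative weight $(1,-2)$. The relevant equation becomes $2d_2 + 3 d_3 + \ldots + r d_r + c_0 + 2 c_1 + \ldots + (1+s) c_s = 4$, so only finitely many monomials in the normal tensors $N_i$ and the $2$-form jets $V_r$ can occur. Condition (a)---that $T_x = 0$ whenever $F_x = 0$---forces the total degree in the $V_r$-factors to be at least $1$ (equivalently, $c_0 + \ldots + c_s \geq 1$), which cuts the list down further. I would enumerate the surviving tuples $(d_i, c_j)$ and, for each, describe $\mathrm{Hom}_{\Ogx}(S^{d_2}N_2 \otimes \cdots \otimes S^{c_s}V_s, \otimes^2 T_x^* X)$ explicitly via Theorem~\ref{MainTheoremO}: each equivariant map is a linear combination of total contractions built from the metric, so this step is a (finite, but somewhat lengthy) bookkeeping of Weyl contractions modulo the symmetries defining $N_2, N_3, \ldots$ and $V_0, V_1, \ldots$. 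I expect the outcome to be that, after imposing (a) and (b), $T$ is a linear combination of a small number of ``Maxwell-type'' and ``curvature-times-$F^2$''-type tensors --- schematically terms like $F_{ik}F_j{}^k$, $F^{kl}F_{kl} g_{ij}$, $R\,F^{kl}F_{kl}\,g_{ij}$, $\mathrm{Ric}\cdot(F\!\cdot\!F)$, $R_{iajb}F^{a}{}_k F^{bk}$, $(\nabla F)^2$-contractions, etc., all with the correct weight.

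Next I would impose condition (c), $\operatorname{div} T = 0$. The key point is that the divergence is again a natural operation, so $\operatorname{div} T$ is a natural $1$-covariant tensor (a natural $1$-form) built from $g$ and $F$, of one lower degree of homogeneity; it too is classified by Theorem~\ref{MainThmMetric}, and setting it to zero imposes linear relations on the coefficients of $T$. Concretely, I would compute $\operatorname{div} T$ for the general candidate from the previous step, expand using the second Bianchi identity and the commutation of covariant derivatives (to handle the curvature terms) and using $\mathrm{d}$ and $\partial$ on $F$ where needed, collect coefficients of each independent natural $1$-form, and solve the resulting homogeneous linear system. The claim is that the only solution is $T = 0$.

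The main obstacle I anticipate is twofold. First, correctly and completely enumerating the $\Ogx$-equivariant maps for each admissible $(d_i,c_j)$ while respecting the nontrivial symmetries of the normal-tensor spaces $N_m$ (the cyclic Bianchi-type identity in particular), so as not to over- or under-count the space of candidates; Proposition~\ref{ProposicionInvariantes} is what lets me pass freely between $\mathrm{Hom}(E,F)$ and invariant forms on $E\otimes F^*$, but the reduction of a raw list of Weyl contractions to a basis modulo symmetries is where errors hide. Second, and more seriously, the divergence computation on the curvature-dependent terms: one must use the contracted second Bianchi identity ($\operatorname{div}\mathrm{Ric} = \tfrac12\,\mathrm{d}r$, and the analogous identity for $\operatorname{div}$ of the Riemann tensor) carefully, since it is precisely these identities that make the General Relativity case work (the Maxwell tensor is divergence-free against the Einstein tensor's conservation), and here the mismatch in homogeneity weight is what should obstruct a solution. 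I would expect the argument to conclude by exhibiting that no linear combination of the weight-$(1,-2)$ candidates can have vanishing divergence unless all coefficients vanish, possibly after also invoking that $n \geq 5$ to rule out low-dimensional coincidences among the contraction identities.
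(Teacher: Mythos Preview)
Your overall architecture---apply Theorem~\ref{MainThmMetric}, enumerate the $\Ogx$-equivariant maps via Theorem~\ref{MainTheoremO}, then kill everything with the divergence condition---is exactly the paper's strategy. Two points deserve comment.

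First, your reading of condition~(a) is too weak. You conclude only that $c_0+\cdots+c_s\ge 1$, i.e.\ that $T$ depends on $F$ somehow. But (a) is a \emph{pointwise} condition: $T_x$ must vanish when the $0$-jet $F_x$ vanishes, regardless of the higher derivatives. Since the summands in the decomposition of Theorem~\ref{MainThmMetric} are polynomial of distinct multidegrees in the $V_r$, they cannot cancel against one another, so each nonzero summand must individually carry a factor of $V_0$; that is, $c_0\ge 1$. The paper uses this to discard five of the ten solutions of $2d_2+\cdots+(s+1)c_s=4$, not just the two purely metric ones. Relatedly, the ``Maxwell-type'' terms $F_{ik}F_j{}^k$ and $F^{kl}F_{kl}\,g_{ij}$ you list schematically have weight $0$, not $-2$; the genuine candidates are of types $R\!\cdot\!F^2$, $F^4$, and $F\!\cdot\!\nabla^2 F$ (the paper finds $21$ generators across these three families).

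Second, and more substantively, your plan for imposing $\div T=0$---expand symbolically, use Bianchi and Ricci identities, match coefficients against a basis of natural $1$-forms---is different from what the paper does, and is precisely where you yourself anticipate the difficulty. The paper sidesteps the identity-bookkeeping entirely: since both the $T_k$ and the $\div T_k$ are natural, linear independence of the $\div T_k$ as natural tensors follows from their linear independence on \emph{any single} example $(X,g,F)$. The paper writes down an explicit metric and $2$-form on an open set of $\mathbb{R}^5$, computes all $21$ divergences there, and checks they are $\mathbb{R}$-linearly independent (by computer). Your symbolic route is legitimate in principle, but the example-based route is what actually makes the computation tractable and eliminates the risk of mis-handling the second Bianchi and curvature-commutation identities that you flag.
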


\proof
Let us invoke Theorem \ref{MainThmMetric}, which describes all natural $p$-tensors associated to metrics and $2$-forms:
$$
\begin{CD}
\left\{
\begin{aligned}
& \, \text{Natural tensors } \M \times \Lambda^2 \longrightarrow \Tp \ \\
& \text{ homogeneous of weight } (a,w)  \ 
\end{aligned} \right\} 
\\ 
\Big\Vert
\\
\bigoplus \limits_{d_i, c_j} \Hom{\Ogx}{S^{d_2}N_2 \otimes \ldots \otimes S^{d_r}N_r \otimes S^{c_0}V_0 \otimes \ldots \otimes S^{c_s}V_s}{\Tx} \ ,
\end{CD}
$$
where $d_2, \ldots , d_r, c_0, \ldots, c_s$ are non-negative integers running over the solutions of the equation
\[
2 d_2 + \ldots + r d_r + (2-a)c_0 + \ldots + (2+s-a)c_s =p-w \ .
\]

Substituting $a=1$, $p=2$ and $w=-2$ in the equation above leaves us with:

\[
2 d_2 + \ldots + r d_r + c_0 + \ldots + (s+1) c_s = 4 \ .
\]

The only solutions are \footnote{Only non-zero coefficients are shown.} :

\begin{enumerate}
\item $d_2 = 1$, $c_0=2$.

\item $d_2 = 1$, $c_1=1$.

\item $d_2 = 2$.

\item $d_3 = 1$, $c_0=1$.

\item $d_4 = 1$.

\item $c_0=4$.

\item $c_1=2$.

\item $c_0 = 2$, $c_1=1$.

\item $c_0 = 1$, $c_2=1$.

\item $c_3 = 1$.

\end{enumerate}

Out of the 10 solutions, we can immediately discard a few of them: the natural tensors proceeding from any solution with $c_0=0$, due to hypothesis (a). This eliminates solutions (2), (3), (5), (7) and (10)  . 

For the remaining 5 solutions, we analyse the associated $\Ogx$-equivariant maps. All of these maps are equivalent (Proposition \ref{ProposicionInvariantes}) to a $\Ogx$-invariant map of the form

\[
S^{d_2}N_2 \otimes \ldots \otimes S^{d_r}N_r \otimes S^{c_0}V_0 \otimes \ldots \otimes S^{c_s}V_s \otimes T_xX \otimes T_xX  \longrightarrow \RR \ ,
\]
which allows us to use Theorem \ref{MainTheoremO} in order to do the computations.

Directly from the application of the theorem, it follows that the only map associated to solutions (4) and (8) is the null map, as there is an odd amount of indices. Therefore, let us describe the generators of maps associated to solutions (1), (6) and (9):

\begin{itemize}
\item $d_2=1$, $c_0=2$:
As $N_2$ is isomorphic to the space of curvature-like tensors $\R$, we consider $\Ogx$-equivariant maps
\[
\R \otimes S^{2}V_0 \longrightarrow T_x^*X \otimes T_x^*X \ .
\]
Due to Theorem \ref{MainTheoremO}, the generators are every non-zero combination of multiplying by $g$, $g^{-1}$ and contracting indices of the expression
$$R_{ijkl} \ F_{ab} \ F_{cd}$$
resulting in a 2-tensor, where $R_{ijkl}$ denotes the Riemann-Christoffel tensor of $g$. 
Taking into account the symmetries of the expression above, the list of generators is reduced to the following ($\Ric$ denotes the Ricci curvature 2-tensor) :
\begin{itemize}
\item $(T_1)_{ij}=R_{iabc}\ F_j^a \ F^{bc}$.
\item $(T_2)_{ij}=(T_2)_{ji}=R_{jabc}\ F_i^a \ F^{bc}$.
\item $(T_3)_{ij}=R_{abcd} \ F^{ab} \ F^{cd} g_{ij}$.
\item $(T_4)_{ij}=R_{iajb}\ F^{a}_c \ F^{bc}$.
\item $(T_5)_{ij}=F_{ab} \ F^{ab} \ \Ric_{ij}$.
\item $(T_6)_{ij}=\Ric_{ia}\ F_{jb} \ F^{ab}$.
\item $(T_7)_{ij}=(T_6)_{ji}=\Ric_{ja} \ F_{ib} \ F^{ab}$.
\item $(T_8)_{ij}=\Ric_{ab} \ F^{a}_i \ F^{b}_j$.
\item $(T_9)_{ij}=\Ric_{ab} \ F^{a}_c \ F^{cb} \ g_{ij}$.
\item $(T_{10})_{ij}=r \ F_{ia} \ F^{a}_j$.
\item $(T_{11})_{ij}=r \ F_{ab} \ F^{ab}  \ g_{ij}$.
\end{itemize} 

\item $c_0=4$:
Now we look at the $\Ogx$-equivariant maps
\[
 S^{4}V_0 \longrightarrow T_x^*X \otimes T_x^*X \ .
\]

Again, by Theorem \ref{MainTheoremO} we must analyse the index contraction of the expression
$$F_{ij} \ F_{kl} \ F_{ab} \ F_{cd} \ ,$$

obtaining:

\begin{itemize}
\item $(T_{12})_{ij}=F_{ab} \ F^{ab} \ F_{ic} \ F_{j}^c$.
\item $(T_{13})_{ij}=F_{c}^a  \ F^{bc} \ F_{ia} \ F_{jb}$.
\item $(T_{14})_{ij}=F_{ab} \ F^{ab} \ F_{cd} \ F^{cd} \ g_{ij}$.
\item $(T_{15})_{ij}=F_{ab} \ F^{bc} \ F_{cd} \ F^{da} \ g_{ij}$.
\end{itemize}

\item $c_0=1$, $c_2=1$:
Lastly, we check the $\Ogx$-equivariant maps
\[
 V_0 \otimes V_2 \longrightarrow T_x^*X \otimes T_x^*X \ ,
\]
corresponding to index contraction of the expression
$$F_{ij,kl} \ F_{ab} \ ,$$
where $F_{ij,kl}$ denotes the coordinates of the normal tensor of order 2 of $F$.

The possibilities now go as follows:

\begin{itemize}
\item $(T_{16})_{ij}=F_{ia,jb} \ F^{ab} $.
\item $(T_{17})_{ij}=F_{ab,ij} \ F^{ab} $.
\item $(T_{18})_{ij}=F_{ia,b}^{\ \ \ \, b} \ F_j^{\ a} $.
\item $(T_{19})_{ij}=F_{ia,b}^{\ \ \ \, a} \ F_j^{\ b} $.
\item $(T_{20})_{ij}=F_{ab,c}^{\ \ \ \, c} \ F^{a b} g_{ij} $.
\item $(T_{21})_{ij}=F_{ab,c}^{\ \ \ \, b} \ F^{a c} g_{ij} $.
\end{itemize} 

\end{itemize}

To summarize, the 21 tensors listed above generate every natural 2-tensor associated to metrics and 2-forms verifying conditions (a), (b) and (c). Therefore, in order to prove the thesis stated in the theorem, we must check that there are no divergence-free 2-tensors generated by the ones on the list. Equivalently, it is enough to prove that their divergences are linearly independent.

As we are working with natural tensors and their divergences are also natural tensors, if we find an example of a manifold equipped with a metric and a 2-form in which the 21 divergences are linearly independent, then they automatically are linearly independent as natural tensors. 

The following example checks out: for $n= 5$,  consider the manifold 
$$
X := \{ (x_1, \ldots, x_5) \in \RR^5 : x_2>0, \ x_3>0, \ x_4>0 \}
$$
equipped with the metric

$$
g=\dxdx{1}{1}+\dxdx{2}{2}+\dxdx{3}{3}+x_{2}\,x_{3}\,x_{4}\,(\dxdx{4}{4}+\,\dxdx{5}{5})
$$

and the 2-form

$$
F=\left({x_{1}}^2\,{x_{3}}^2\right)\,\d x_{1}\,\wedge \, \d x_{2}+\d x_{3}\,\wedge \, \d x_{4}
$$

In this setting, the 21 tensors have the following expression:

\begin{dgroup*}
\begin{dmath*}T_{1}=-\frac{1}{2\,{x_{2}}^2\,{x_{3}}^2\,x_{4}}\,\dxdx{2}{3}+\frac{1}{2\,x_{2}\,{x_{3}}^3\,x_{4}}\,\dxdx{3}{3}+\frac{{x_{1}}^2\,x_{3}}{2\,x_{2}}\,\dxdx{4}{1}+\frac{1}{2\,{x_{3}}^2}\,\dxdx{4}{4} \ , \end{dmath*}
     \begin{dmath*}T_{2}=\frac{{x_{1}}^2\,x_{3}}{2\,x_{2}}\,\dxdx{1}{4}-\frac{1}{2\,{x_{2}}^2\,{x_{3}}^2\,x_{4}}\,\dxdx{3}{2}+\frac{1}{2\,x_{2}\,{x_{3}}^3\,x_{4}}\,\dxdx{3}{3}+\frac{1}{2\,{x_{3}}^2}\,\dxdx{4}{4} \ , \end{dmath*}
     \begin{dmath*}T_{3}=\frac{1}{x_{2}\,{x_{3}}^3\,x_{4}}\,\dxdx{1}{1}+\frac{1}{x_{2}\,{x_{3}}^3\,x_{4}}\,\dxdx{2}{2} + \frac{1}{x_{2}\,{x_{3}}^3\,x_{4}}\,\dxdx{3}{3}+\frac{\dxdx{4}{4}}{{x_{3}}^2}+\frac{\dxdx{5}{5}}{{x_{3}}^2} \ , \end{dmath*}
     \begin{dmath*}T_{4}=\frac{1}{4\,{x_{2}}^3\,x_{3}\,x_{4}}\,\dxdx{2}{2}-\frac{1}{4\,{x_{2}}^2\,{x_{3}}^2\,x_{4}}\,\dxdx{2}{3}-\frac{1}{4\,{x_{2}}^2\,{x_{3}}^2\,x_{4}}\,\dxdx{3}{2}+\frac{1}{4\,x_{2}\,{x_{3}}^3\,x_{4}}\,\dxdx{3}{3}+\frac{x_{4}\,{x_{1}}^4\,{x_{3}}^7+x_{2}}{4\,x_{2}\,{x_{3}}^2}\,\dxdx{4}{4}+\frac{x_{2}\,{x_{1}}^4\,{x_{3}}^6\,{x_{4}}^4-x_{3}\,{x_{4}}^3+2\,x_{2}}{4\,{x_{2}}^2\,x_{3}\,{x_{4}}^3}\,\dxdx{5}{5} \ , \end{dmath*}
     \begin{dmath*}T_{5}=-\frac{x_{2}\,x_{4}\,{x_{1}}^4\,{x_{3}}^5+1}{{x_{2}}^3\,x_{3}\,x_{4}}\,\dxdx{2}{2}+\frac{x_{2}\,x_{4}\,{x_{1}}^4\,{x_{3}}^5+1}{{x_{2}}^2\,{x_{3}}^2\,x_{4}}\,\dxdx{2}{3}+\frac{x_{2}\,x_{4}\,{x_{1}}^4\,{x_{3}}^5+1}{{x_{2}}^2\,{x_{3}}^2\,x_{4}}\,\dxdx{3}{2}-\frac{x_{2}\,x_{4}\,{x_{1}}^4\,{x_{3}}^5+1}{x_{2}\,{x_{3}}^3\,x_{4}}\,\dxdx{3}{3}-\frac{x_{2}\,x_{4}\,{x_{1}}^4\,{x_{3}}^5+1}{x_{2}\,x_{3}\,{x_{4}}^3}\,\dxdx{4}{4}-\frac{x_{2}\,x_{4}\,{x_{1}}^4\,{x_{3}}^5+1}{x_{2}\,x_{3}\,{x_{4}}^3}\,\dxdx{5}{5} \ , \end{dmath*}
     \begin{dmath*}T_{6}=-\frac{{x_{1}}^4\,{x_{3}}^4}{2\,{x_{2}}^2}\,\dxdx{2}{2}+\frac{1}{2\,{x_{2}}^2\,{x_{3}}^2\,x_{4}}\,\dxdx{2}{3}+\frac{{x_{1}}^4\,{x_{3}}^3}{2\,x_{2}}\,\dxdx{3}{2}-\frac{1}{2\,x_{2}\,{x_{3}}^3\,x_{4}}\,\dxdx{3}{3}-\frac{1}{2\,x_{2}\,x_{3}\,{x_{4}}^3}\,\dxdx{4}{4} \ , \end{dmath*}
     \begin{dmath*}T_{7}=-\frac{{x_{1}}^4\,{x_{3}}^4}{2\,{x_{2}}^2}\,\dxdx{2}{2}+\frac{{x_{1}}^4\,{x_{3}}^3}{2\,x_{2}}\,\dxdx{2}{3}+\frac{1}{2\,{x_{2}}^2\,{x_{3}}^2\,x_{4}}\,\dxdx{3}{2}-\frac{1}{2\,x_{2}\,{x_{3}}^3\,x_{4}}\,\dxdx{3}{3}-\frac{1}{2\,x_{2}\,x_{3}\,{x_{4}}^3}\,\dxdx{4}{4} \ , \end{dmath*}
     \begin{dmath*}T_{8}=-\frac{{x_{1}}^4\,{x_{3}}^4}{2\,{x_{2}}^2}\,\dxdx{1}{1}-\frac{{x_{1}}^2\,x_{3}}{2\,x_{2}}\,\dxdx{1}{4}-\frac{1}{2\,{x_{2}}^2\,{x_{3}}^2\,{x_{4}}^4}\,\dxdx{3}{3}-\frac{{x_{1}}^2\,x_{3}}{2\,x_{2}}\,\dxdx{4}{1}-\frac{1}{2\,{x_{3}}^2}\,\dxdx{4}{4} \ , \end{dmath*}
     \begin{dmath*}T_{9}=-\frac{{x_{1}}^4\,{x_{3}}^7\,{x_{4}}^4+x_{3}+x_{2}\,{x_{4}}^3}{2\,{x_{2}}^2\,{x_{3}}^3\,{x_{4}}^4}\,\dxdx{1}{1}-\frac{{x_{1}}^4\,{x_{3}}^7\,{x_{4}}^4+x_{3}+x_{2}\,{x_{4}}^3}{2\,{x_{2}}^2\,{x_{3}}^3\,{x_{4}}^4}\,\dxdx{2}{2}-\frac{{x_{1}}^4\,{x_{3}}^7\,{x_{4}}^4+x_{3}+x_{2}\,{x_{4}}^3}{2\,{x_{2}}^2\,{x_{3}}^3\,{x_{4}}^4}\,\dxdx{3}{3}-\frac{{x_{1}}^4\,{x_{3}}^7\,{x_{4}}^4+x_{3}+x_{2}\,{x_{4}}^3}{2\,x_{2}\,{x_{3}}^2\,{x_{4}}^3}\,\dxdx{4}{4}-\frac{{x_{1}}^4\,{x_{3}}^7\,{x_{4}}^4+x_{3}+x_{2}\,{x_{4}}^3}{2\,x_{2}\,{x_{3}}^2\,{x_{4}}^3}\,\dxdx{5}{5} \ , \end{dmath*}
     \begin{dmath*}T_{10}=-\frac{{x_{1}}^4\,{x_{3}}^2\,\left({x_{2}}^2\,{x_{4}}^3+2\,x_{2}\,x_{3}+{x_{3}}^2\,{x_{4}}^3\right)}{2\,{x_{2}}^2\,{x_{4}}^3}\,\dxdx{1}{1}-\frac{{x_{1}}^4\,{x_{3}}^2\,\left({x_{2}}^2\,{x_{4}}^3+2\,x_{2}\,x_{3}+{x_{3}}^2\,{x_{4}}^3\right)}{2\,{x_{2}}^2\,{x_{4}}^3}\,\dxdx{2}{2}-\frac{{x_{2}}^2\,{x_{4}}^3+2\,x_{2}\,x_{3}+{x_{3}}^2\,{x_{4}}^3}{2\,{x_{2}}^3\,{x_{3}}^3\,{x_{4}}^4}\,\dxdx{3}{3}-\frac{{x_{2}}^2\,{x_{4}}^3+2\,x_{2}\,x_{3}+{x_{3}}^2\,{x_{4}}^3}{2\,{x_{2}}^2\,{x_{3}}^2\,{x_{4}}^3}\,\dxdx{4}{4} \ , \end{dmath*}
     \begin{dmath*}T_{11}=-\frac{\left(x_{2}\,x_{4}\,{x_{1}}^4\,{x_{3}}^5+1\right)\,\left({x_{2}}^2\,{x_{4}}^3+2\,x_{2}\,x_{3}+{x_{3}}^2\,{x_{4}}^3\right)}{{x_{2}}^3\,{x_{3}}^3\,{x_{4}}^4}\,\dxdx{1}{1}-\frac{\left(x_{2}\,x_{4}\,{x_{1}}^4\,{x_{3}}^5+1\right)\,\left({x_{2}}^2\,{x_{4}}^3+2\,x_{2}\,x_{3}+{x_{3}}^2\,{x_{4}}^3\right)}{{x_{2}}^3\,{x_{3}}^3\,{x_{4}}^4}\,\dxdx{2}{2}-\frac{\left(x_{2}\,x_{4}\,{x_{1}}^4\,{x_{3}}^5+1\right)\,\left({x_{2}}^2\,{x_{4}}^3+2\,x_{2}\,x_{3}+{x_{3}}^2\,{x_{4}}^3\right)}{{x_{2}}^3\,{x_{3}}^3\,{x_{4}}^4}\,\dxdx{3}{3}-\frac{\left(x_{2}\,x_{4}\,{x_{1}}^4\,{x_{3}}^5+1\right)\,\left({x_{2}}^2\,{x_{4}}^3+2\,x_{2}\,x_{3}+{x_{3}}^2\,{x_{4}}^3\right)}{{x_{2}}^2\,{x_{3}}^2\,{x_{4}}^3}\,\dxdx{4}{4}-\frac{\left(x_{2}\,x_{4}\,{x_{1}}^4\,{x_{3}}^5+1\right)\,\left({x_{2}}^2\,{x_{4}}^3+2\,x_{2}\,x_{3}+{x_{3}}^2\,{x_{4}}^3\right)}{{x_{2}}^2\,{x_{3}}^2\,{x_{4}}^3}\,\dxdx{5}{5} \ , \end{dmath*}
     \begin{dmath*}T_{12}=\frac{2\,{x_{1}}^4\,{x_{3}}^3\,\left(x_{2}\,x_{4}\,{x_{1}}^4\,{x_{3}}^5+1\right)}{x_{2}\,x_{4}}\,\dxdx{1}{1}+\frac{2\,{x_{1}}^4\,{x_{3}}^3\,\left(x_{2}\,x_{4}\,{x_{1}}^4\,{x_{3}}^5+1\right)}{x_{2}\,x_{4}}\,\dxdx{2}{2}+\frac{2\,x_{2}\,x_{4}\,{x_{1}}^4\,{x_{3}}^5+2}{{x_{2}}^2\,{x_{3}}^2\,{x_{4}}^2}\,\dxdx{3}{3}+\frac{2\,\left(x_{2}\,x_{4}\,{x_{1}}^4\,{x_{3}}^5+1\right)}{x_{2}\,x_{3}\,x_{4}}\,\dxdx{4}{4} \ , \end{dmath*}
     \begin{dmath*}T_{13}=\left(-{x_{1}}^8\,{x_{3}}^8\right)\,\dxdx{1}{1}+\left(-{x_{1}}^8\,{x_{3}}^8\right)\,\dxdx{2}{2}-\frac{1}{{x_{2}}^2\,{x_{3}}^2\,{x_{4}}^2}\,\dxdx{3}{3}-\frac{1}{x_{2}\,x_{3}\,x_{4}}\,\dxdx{4}{4} \ , \end{dmath*}
     \begin{dmath*}T_{14}=\frac{4\,{\left(x_{2}\,x_{4}\,{x_{1}}^4\,{x_{3}}^5+1\right)}^2}{{x_{2}}^2\,{x_{3}}^2\,{x_{4}}^2}\,\dxdx{1}{1}+\frac{4\,{\left(x_{2}\,x_{4}\,{x_{1}}^4\,{x_{3}}^5+1\right)}^2}{{x_{2}}^2\,{x_{3}}^2\,{x_{4}}^2}\,\dxdx{2}{2}+\frac{4\,{\left(x_{2}\,x_{4}\,{x_{1}}^4\,{x_{3}}^5+1\right)}^2}{{x_{2}}^2\,{x_{3}}^2\,{x_{4}}^2}\,\dxdx{3}{3}+\frac{4\,{\left(x_{2}\,x_{4}\,{x_{1}}^4\,{x_{3}}^5+1\right)}^2}{x_{2}\,x_{3}\,x_{4}}\,\dxdx{4}{4}+\frac{4\,{\left(x_{2}\,x_{4}\,{x_{1}}^4\,{x_{3}}^5+1\right)}^2}{x_{2}\,x_{3}\,x_{4}}\,\dxdx{5}{5} \ , \end{dmath*}
     \begin{dmath*}T_{15}=\frac{2\,\left({x_{1}}^8\,{x_{2}}^2\,{x_{3}}^{10}\,{x_{4}}^2+1\right)}{{x_{2}}^2\,{x_{3}}^2\,{x_{4}}^2}\,\dxdx{1}{1}+\frac{2\,\left({x_{1}}^8\,{x_{2}}^2\,{x_{3}}^{10}\,{x_{4}}^2+1\right)}{{x_{2}}^2\,{x_{3}}^2\,{x_{4}}^2}\,\dxdx{2}{2}+\frac{2\,\left({x_{1}}^8\,{x_{2}}^2\,{x_{3}}^{10}\,{x_{4}}^2+1\right)}{{x_{2}}^2\,{x_{3}}^2\,{x_{4}}^2}\,\dxdx{3}{3}+\frac{2\,\left({x_{1}}^8\,{x_{2}}^2\,{x_{3}}^{10}\,{x_{4}}^2+1\right)}{x_{2}\,x_{3}\,x_{4}}\,\dxdx{4}{4}+\frac{2\,\left({x_{1}}^8\,{x_{2}}^2\,{x_{3}}^{10}\,{x_{4}}^2+1\right)}{x_{2}\,x_{3}\,x_{4}}\,\dxdx{5}{5} \ , \end{dmath*}
     \begin{dmath*}T_{16}=\left(-2\,{x_{1}}^2\,{x_{3}}^4\right)\,\dxdx{1}{1}+\left(-4\,{x_{1}}^3\,{x_{3}}^3\right)\,\dxdx{1}{3} \ , \end{dmath*}
     \begin{dmath*}T_{17}=\left(4\,{x_{1}}^2\,{x_{3}}^4\right)\,\dxdx{1}{1}+\left(8\,{x_{1}}^3\,{x_{3}}^3\right)\,\dxdx{1}{3}+\left(8\,{x_{1}}^3\,{x_{3}}^3\right)\,\dxdx{3}{1}+\left(4\,{x_{1}}^4\,{x_{3}}^2\right)\,\dxdx{3}{3} \ , \end{dmath*}
     \begin{dmath*}T_{18}=\left(2\,{x_{1}}^2\,{x_{3}}^2\,\left({x_{1}}^2+{x_{3}}^2\right)\right)\,\dxdx{1}{1}+\left(2\,{x_{1}}^2\,{x_{3}}^2\,\left({x_{1}}^2+{x_{3}}^2\right)\right)\,\dxdx{2}{2} \ , \end{dmath*}
     \begin{dmath*}T_{19}=\left(2\,{x_{1}}^2\,{x_{3}}^4\right)\,\dxdx{2}{2}+\left(4\,x_{1}\,x_{3}\right)\,\dxdx{2}{4} \ , \end{dmath*}
     \begin{dmath*}T_{20}=\left(4\,{x_{1}}^2\,{x_{3}}^2\,\left({x_{1}}^2+{x_{3}}^2\right)\right)\,\dxdx{1}{1}+\left(4\,{x_{1}}^2\,{x_{3}}^2\,\left({x_{1}}^2+{x_{3}}^2\right)\right)\,\dxdx{2}{2}+\left(4\,{x_{1}}^2\,{x_{3}}^2\,\left({x_{1}}^2+{x_{3}}^2\right)\right)\,\dxdx{3}{3}+\left(4\,{x_{1}}^2\,x_{2}\,{x_{3}}^3\,x_{4}\,\left({x_{1}}^2+{x_{3}}^2\right)\right)\,\dxdx{4}{4}+\left(4\,{x_{1}}^2\,x_{2}\,{x_{3}}^3\,x_{4}\,\left({x_{1}}^2+{x_{3}}^2\right)\right)\,\dxdx{5}{5} \ , \end{dmath*}
     \begin{dmath*}T_{21}=\left(2\,{x_{1}}^2\,{x_{3}}^4\right)\,\dxdx{1}{1}+\left(2\,{x_{1}}^2\,{x_{3}}^4\right)\,\dxdx{2}{2}+\left(2\,{x_{1}}^2\,{x_{3}}^4\right)\,\dxdx{3}{3}+\left(2\,{x_{1}}^2\,x_{2}\,{x_{3}}^5\,x_{4}\right)\,\dxdx{4}{4}+\left(2\,{x_{1}}^2\,x_{2}\,{x_{3}}^5\,x_{4}\right)\,\dxdx{5}{5} \ . \end{dmath*}
     \end{dgroup*}

\vspace*{1cm}

Their divergences are expressed, in the same coordinates, as follows:

\vspace*{1cm}

\begin{dgroup*}
\begin{dmath*}\text{div } T_{1} =-\frac{1}{4\,{x_{2}}^2\,{x_{3}}^3\,x_{4}}\,\d x_{2}-\frac{5\,{x_{2}}^2-2\,{x_{3}}^2}{4\,{x_{2}}^3\,{x_{3}}^4\,x_{4}}\,\d x_{3}-\frac{1}{4\,x_{2}\,{x_{3}}^3\,{x_{4}}^2}\,\d x_{4} \ , \end{dmath*}
     \begin{dmath*}\text{div } T_{2} =\frac{1}{4\,{x_{2}}^2\,{x_{3}}^3\,x_{4}}\,\d x_{2}-\frac{5}{4\,x_{2}\,{x_{3}}^4\,x_{4}}\,\d x_{3}+\frac{4\,x_{1}\,{x_{3}}^4\,{x_{4}}^2-1}{4\,x_{2}\,{x_{3}}^3\,{x_{4}}^2}\,\d x_{4} \ , \end{dmath*}
     \begin{dmath*}\text{div } T_{3} =-\frac{1}{{x_{2}}^2\,{x_{3}}^3\,x_{4}}\,\d x_{2}-\frac{3}{x_{2}\,{x_{3}}^4\,x_{4}}\,\d x_{3}-\frac{1}{x_{2}\,{x_{3}}^3\,{x_{4}}^2}\,\d x_{4} \ , \end{dmath*}
     \begin{dmath*}\text{div } T_{4} =-\frac{2\,{x_{1}}^4\,x_{2}\,{x_{3}}^7\,{x_{4}}^4-{x_{2}}^2\,{x_{4}}^3+2\,x_{2}\,x_{3}+3\,{x_{3}}^2\,{x_{4}}^3}{8\,{x_{2}}^4\,{x_{3}}^3\,{x_{4}}^4}\,\d x_{2}-\frac{2\,{x_{1}}^4\,x_{2}\,{x_{3}}^7\,{x_{4}}^4+5\,{x_{2}}^2\,{x_{4}}^3+2\,x_{2}\,x_{3}-3\,{x_{3}}^2\,{x_{4}}^3}{8\,{x_{2}}^3\,{x_{3}}^4\,{x_{4}}^4}\,\d x_{3}-\frac{{x_{2}}^2\,{x_{4}}^3+2\,x_{2}\,x_{3}-{x_{3}}^2\,{x_{4}}^3}{8\,{x_{2}}^3\,{x_{3}}^3\,{x_{4}}^5}\,\d x_{4} \ , \end{dmath*}
     \begin{dmath*}\text{div } T_{5} =\frac{4\,{x_{1}}^4\,{x_{2}}^3\,{x_{3}}^5\,{x_{4}}^4+{x_{1}}^4\,{x_{2}}^2\,{x_{3}}^6\,x_{4}+{x_{1}}^4\,x_{2}\,{x_{3}}^7\,{x_{4}}^4-{x_{2}}^2\,{x_{4}}^3+x_{2}\,x_{3}+2\,{x_{3}}^2\,{x_{4}}^3}{{x_{2}}^4\,{x_{3}}^3\,{x_{4}}^4}\,\d x_{2}+\frac{-3\,{x_{1}}^4\,{x_{2}}^3\,{x_{3}}^5\,{x_{4}}^4+{x_{1}}^4\,{x_{2}}^2\,{x_{3}}^6\,x_{4}+2\,{x_{2}}^2\,{x_{4}}^3+x_{2}\,x_{3}-{x_{3}}^2\,{x_{4}}^3}{{x_{2}}^3\,{x_{3}}^4\,{x_{4}}^4}\,\d x_{3}+\frac{3\,{x_{1}}^4\,x_{2}\,{x_{3}}^5\,x_{4}+4}{{x_{2}}^2\,{x_{3}}^2\,{x_{4}}^5}\,\d x_{4} \ , \end{dmath*}
     \begin{dmath*}\text{div } T_{6} =\frac{8\,{x_{1}}^4\,{x_{2}}^2\,{x_{3}}^4\,{x_{4}}^4+2\,{x_{1}}^4\,{x_{3}}^6\,{x_{4}}^4+1}{4\,{x_{2}}^3\,{x_{3}}^2\,{x_{4}}^4}\,\d x_{2}+\frac{4\,{x_{2}}^2\,{x_{4}}^3+x_{2}\,x_{3}-2\,{x_{3}}^2\,{x_{4}}^3}{4\,{x_{2}}^3\,{x_{3}}^4\,{x_{4}}^4}\,\d x_{3}+\frac{7}{4\,{x_{2}}^2\,{x_{3}}^2\,{x_{4}}^5}\,\d x_{4} \ , \end{dmath*}
     \begin{dmath*}\text{div } T_{7} =\frac{2\,{x_{1}}^4\,{x_{3}}^7\,{x_{4}}^4-2\,x_{2}\,{x_{4}}^3+x_{3}}{4\,{x_{2}}^3\,{x_{3}}^3\,{x_{4}}^4}\,\d x_{2}+\frac{4\,x_{2}\,{x_{4}}^3+x_{3}}{4\,{x_{2}}^2\,{x_{3}}^4\,{x_{4}}^4}\,\d x_{3}+\frac{7}{4\,{x_{2}}^2\,{x_{3}}^2\,{x_{4}}^5}\,\d x_{4} \ , \end{dmath*}
     \begin{dmath*}\text{div } T_{8} =-\frac{2\,{x_{1}}^3\,{x_{3}}^4}{{x_{2}}^2}\,\d x_{1}+\frac{1}{4\,{x_{2}}^2\,{x_{3}}^3\,x_{4}}\,\d x_{2}+\frac{x_{2}\,{x_{4}}^3+2\,x_{3}}{4\,{x_{2}}^2\,{x_{3}}^4\,{x_{4}}^4}\,\d x_{3}-\frac{4\,x_{1}\,{x_{3}}^4\,{x_{4}}^2-1}{4\,x_{2}\,{x_{3}}^3\,{x_{4}}^2}\,\d x_{4} \ , \end{dmath*}
     \begin{dmath*}\text{div } T_{9} =-\frac{2\,{x_{1}}^3\,{x_{3}}^4}{{x_{2}}^2}\,\d x_{1}+\frac{2\,{x_{1}}^4\,{x_{3}}^7\,{x_{4}}^4+x_{2}\,{x_{4}}^3+2\,x_{3}}{2\,{x_{2}}^3\,{x_{3}}^3\,{x_{4}}^4}\,\d x_{2}+\frac{-4\,{x_{1}}^4\,{x_{3}}^7\,{x_{4}}^4+3\,x_{2}\,{x_{4}}^3+2\,x_{3}}{2\,{x_{2}}^2\,{x_{3}}^4\,{x_{4}}^4}\,\d x_{3}+\frac{x_{2}\,{x_{4}}^3+4\,x_{3}}{2\,{x_{2}}^2\,{x_{3}}^3\,{x_{4}}^5}\,\d x_{4} \ , \end{dmath*}
     \begin{dmath*}\text{div } T_{10} =-\frac{2\,{x_{1}}^3\,{x_{3}}^2\,\left({x_{2}}^2\,{x_{4}}^3+2\,x_{2}\,x_{3}+{x_{3}}^2\,{x_{4}}^3\right)}{{x_{2}}^2\,{x_{4}}^3}\,\d x_{1}+\frac{-2\,{x_{1}}^4\,{x_{2}}^3\,{x_{3}}^5\,{x_{4}}^4+2\,{x_{1}}^4\,x_{2}\,{x_{3}}^7\,{x_{4}}^4+{x_{2}}^2\,{x_{4}}^3+2\,x_{2}\,x_{3}+{x_{3}}^2\,{x_{4}}^3}{4\,{x_{2}}^4\,{x_{3}}^3\,{x_{4}}^4}\,\d x_{2}+\frac{5\,{x_{2}}^2\,{x_{4}}^3+6\,x_{2}\,x_{3}+{x_{3}}^2\,{x_{4}}^3}{4\,{x_{2}}^3\,{x_{3}}^4\,{x_{4}}^4}\,\d x_{3}+\frac{{x_{2}}^2\,{x_{4}}^3+14\,x_{2}\,x_{3}+{x_{3}}^2\,{x_{4}}^3}{4\,{x_{2}}^3\,{x_{3}}^3\,{x_{4}}^5}\,\d x_{4} \ , \end{dmath*}
     \begin{dmath*}\text{div } T_{11} =-\frac{4\,{x_{1}}^3\,{x_{3}}^2\,\left({x_{2}}^2\,{x_{4}}^3+2\,x_{2}\,x_{3}+{x_{3}}^2\,{x_{4}}^3\right)}{{x_{2}}^2\,{x_{4}}^3}\,\d x_{1}+\frac{2\,{x_{1}}^4\,{x_{2}}^2\,{x_{3}}^6\,x_{4}+2\,{x_{1}}^4\,x_{2}\,{x_{3}}^7\,{x_{4}}^4+{x_{2}}^2\,{x_{4}}^3+4\,x_{2}\,x_{3}+3\,{x_{3}}^2\,{x_{4}}^3}{{x_{2}}^4\,{x_{3}}^3\,{x_{4}}^4}\,\d x_{2}+\frac{-2\,{x_{1}}^4\,{x_{2}}^3\,{x_{3}}^5\,{x_{4}}^4-6\,{x_{1}}^4\,{x_{2}}^2\,{x_{3}}^6\,x_{4}-4\,{x_{1}}^4\,x_{2}\,{x_{3}}^7\,{x_{4}}^4+3\,{x_{2}}^2\,{x_{4}}^3+4\,x_{2}\,x_{3}+{x_{3}}^2\,{x_{4}}^3}{{x_{2}}^3\,{x_{3}}^4\,{x_{4}}^4}\,\d x_{3}+\frac{6\,{x_{1}}^4\,{x_{2}}^2\,{x_{3}}^6\,x_{4}+{x_{2}}^2\,{x_{4}}^3+8\,x_{2}\,x_{3}+{x_{3}}^2\,{x_{4}}^3}{{x_{2}}^3\,{x_{3}}^3\,{x_{4}}^5}\,\d x_{4} \ , \end{dmath*}
     \begin{dmath*}\text{div } T_{12} =\frac{8\,{x_{1}}^3\,{x_{3}}^3\,\left(2\,{x_{1}}^4\,x_{2}\,{x_{3}}^5\,x_{4}+1\right)}{x_{2}\,x_{4}}\,\d x_{1}-\frac{-2\,{x_{1}}^8\,{x_{2}}^2\,{x_{3}}^{10}\,{x_{4}}^2+{x_{1}}^4\,x_{2}\,{x_{3}}^5\,x_{4}+1}{{x_{2}}^3\,{x_{3}}^2\,{x_{4}}^2}\,\d x_{2}+\frac{7\,{x_{1}}^4\,x_{2}\,{x_{3}}^5\,x_{4}-3}{{x_{2}}^2\,{x_{3}}^3\,{x_{4}}^2}\,\d x_{3}-\frac{{x_{1}}^4\,x_{2}\,{x_{3}}^5\,x_{4}+3}{{x_{2}}^2\,{x_{3}}^2\,{x_{4}}^3}\,\d x_{4} \ , \end{dmath*}
     \begin{dmath*}\text{div } T_{13} =\left(-8\,{x_{1}}^7\,{x_{3}}^8\right)\,\d x_{1}-\frac{2\,{x_{1}}^8\,{x_{2}}^2\,{x_{3}}^{10}\,{x_{4}}^2-1}{2\,{x_{2}}^3\,{x_{3}}^2\,{x_{4}}^2}\,\d x_{2}+\frac{3}{2\,{x_{2}}^2\,{x_{3}}^3\,{x_{4}}^2}\,\d x_{3}+\frac{3}{2\,{x_{2}}^2\,{x_{3}}^2\,{x_{4}}^3}\,\d x_{4} \ , \end{dmath*}
     \begin{dmath*}\text{div } T_{14} =\frac{32\,{x_{1}}^3\,{x_{3}}^3\,\left({x_{1}}^4\,x_{2}\,{x_{3}}^5\,x_{4}+1\right)}{x_{2}\,x_{4}}\,\d x_{1}-\frac{8\,\left({x_{1}}^4\,x_{2}\,{x_{3}}^5\,x_{4}+1\right)}{{x_{2}}^3\,{x_{3}}^2\,{x_{4}}^2}\,\d x_{2}+\frac{8\,\left(4\,{x_{1}}^8\,{x_{2}}^2\,{x_{3}}^{10}\,{x_{4}}^2+3\,{x_{1}}^4\,x_{2}\,{x_{3}}^5\,x_{4}-1\right)}{{x_{2}}^2\,{x_{3}}^3\,{x_{4}}^2}\,\d x_{3}-\frac{8\,\left({x_{1}}^4\,x_{2}\,{x_{3}}^5\,x_{4}+1\right)}{{x_{2}}^2\,{x_{3}}^2\,{x_{4}}^3}\,\d x_{4} \ , \end{dmath*}
     \begin{dmath*}\text{div } T_{15} =\left(16\,{x_{1}}^7\,{x_{3}}^8\right)\,\d x_{1}-\frac{4}{{x_{2}}^3\,{x_{3}}^2\,{x_{4}}^2}\,\d x_{2}+\frac{4\,\left(4\,{x_{1}}^8\,{x_{2}}^2\,{x_{3}}^{10}\,{x_{4}}^2-1\right)}{{x_{2}}^2\,{x_{3}}^3\,{x_{4}}^2}\,\d x_{3}-\frac{4}{{x_{2}}^2\,{x_{3}}^2\,{x_{4}}^3}\,\d x_{4} \ , \end{dmath*}
     \begin{dmath*}\text{div } T_{16} =\left(-4\,x_{1}\,{x_{3}}^4\right)\,\d x_{1}+\left(-12\,{x_{1}}^2\,{x_{3}}^3\right)\,\d x_{3} \ , \end{dmath*}
     \begin{dmath*}\text{div } T_{17} =\left(8\,x_{1}\,{x_{3}}^2\,\left(4\,{x_{1}}^2+{x_{3}}^2\right)\right)\,\d x_{1}+\left(12\,{x_{1}}^2\,x_{3}\,\left({x_{1}}^2+2\,{x_{3}}^2\right)\right)\,\d x_{3} \ , \end{dmath*}
     \begin{dmath*}\text{div } T_{18} =\left(4\,x_{1}\,{x_{3}}^2\,\left(2\,{x_{1}}^2+{x_{3}}^2\right)\right)\,\d x_{1}+\frac{2\,{x_{1}}^2\,{x_{3}}^2\,\left({x_{1}}^2+{x_{3}}^2\right)}{x_{2}}\,\d x_{2} \ , \end{dmath*}
     \begin{dmath*}\text{div } T_{19} =\frac{2\,{x_{1}}^2\,{x_{3}}^4}{x_{2}}\,\d x_{2}+\frac{2\,x_{1}\,x_{3}}{x_{2}}\,\d x_{4} \ , \end{dmath*}
     \begin{dmath*}\text{div } T_{20} =\left(8\,x_{1}\,{x_{3}}^2\,\left(2\,{x_{1}}^2+{x_{3}}^2\right)\right)\,\d x_{1}+\left(8\,{x_{1}}^2\,x_{3}\,\left({x_{1}}^2+2\,{x_{3}}^2\right)\right)\,\d x_{3} \ , \end{dmath*}
     \begin{dmath*}\text{div } T_{21} =\left(4\,x_{1}\,{x_{3}}^4\right)\,\d x_{1}+\left(8\,{x_{1}}^2\,{x_{3}}^3\right)\,\d x_{3} \ . \end{dmath*}
     \end{dgroup*}

It can be checked that these divergences are $\RR$-linearly independent, finishing the proof.

\qed

\begin{cor}
There exists no natural 2-tensor $T$ associated to metrics and 2-forms that verifies the following properties:
\begin{enumerate}
\item[(a)] If $F_x=0$, then $T_x=0$.
\item[(b)] $T$ is homogeneous of relative weight $(1,-2)$
\item[(c)] $\div T = -i_{\partial F} F$.
\end{enumerate}
\end{cor}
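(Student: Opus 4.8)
The plan is to deduce the Corollary from the Main Theorem by a homogeneity (dimensional) count showing that any such $T$ would necessarily be divergence-free. Suppose, for contradiction, that $T\colon\M\times\Lambda^2\to\mathcal{T}_2$ is a natural $2$-tensor satisfying (a), (b) and (c), and compare the homogeneity types of the two natural $1$-tensors occurring in (c). On one hand, $\div T$ is obtained from $T$ by raising one index with $g^{-1}$ and contracting with the Levi-Civita covariant derivative; since $\nabla^g$ is unchanged under $g\mapsto\lambda^2 g$ while $g^{-1}\mapsto\lambda^{-2}g^{-1}$, hypothesis (b) gives $(\div T)(\lambda^2 g,\lambda F)=\lambda^{-4}(\div T)(g,F)$, so $\div T$ is homogeneous of relative weight $(1,-4)$. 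On the other hand, the codifferential $\partial F$ of a $2$-form is homogeneous of relative weight $(1,-1)$ (a single $g^{-1}$ acting on $\lambda F$), and $-i_{\partial F}F$ is obtained from $\partial F$ by raising its index with $g^{-1}$ (a further factor $\lambda^{-2}$) and contracting it against $F$ (a factor $\lambda$), so $-i_{\partial F}F$ is homogeneous of relative weight $(1,-2)$.

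Since these two weights disagree, condition (c) can hold only if both sides vanish identically: writing $A:=(\div T)(g,F)$, which by (c) equals $(-i_{\partial F}F)(g,F)$, and evaluating (c) at $(\lambda^2 g,\lambda F)$ gives $\lambda^{-4}A=\lambda^{-2}A$ for every $\lambda\neq 0$ and every $(g,F)$, hence $A=0$. Thus $T$ is divergence-free, and Theorem \ref{MainTheorem} forces $T\equiv 0$. But then (c) reads $-i_{\partial F}F=\div T\equiv 0$, which is false, because the natural $1$-tensor $-i_{\partial F}F$ is not identically zero: for the closed $2$-form $F=x_1\,\d x_1\wedge\d x_2$ on $\RR^n$ with a flat metric one computes $\partial F=-\d x_2$ and hence $-i_{\partial F}F=-x_1\,\d x_1\neq 0$; equivalently, one may recall that $-i_{\partial F}F=\div E$ for the Maxwell tensor $E$ of \eqref{EMT}, which is not divergence-free. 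Since this witness $F$ is closed, the same contradiction is reached even if (c) is only imposed for closed $F$.

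The one place that demands care is the weight bookkeeping of the first step: one must check that the divergence lowers the relative weight by exactly $2$, through its single contraction with $g^{-1}$, whereas $-i_{\partial F}F$ genuinely lies at relative weight $(1,-2)$. This mismatch is the heart of the argument — it expresses the incompatibility between the conservation law (c) and the Lanczos--Lovelock scaling (b) for any non-trivial $T$ — and once it is in hand the statement is a formal consequence of Theorem \ref{MainTheorem}.
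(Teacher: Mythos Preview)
Your proof is correct and takes a genuinely different, more elementary route than the paper. The paper's argument is computational: it recalls that the Maxwell tensor $E$ satisfies $\div E=-i_{\partial F}F$ and then verifies, in the explicit example $(X,g,F)$ constructed for Theorem~\ref{MainTheorem}, that $\div E$ is $\RR$-linearly independent of the twenty-one divergences $\div T_1,\dots,\div T_{21}$; hence no linear combination of the $T_k$ can satisfy (c). Your argument instead exploits a homogeneity mismatch: $\div T$ has relative weight $(1,-4)$ while $-i_{\partial F}F$ has relative weight $(1,-2)$, so (c) forces both sides to vanish identically, contradicting the non-triviality of $-i_{\partial F}F$. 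This bypasses the classification and the explicit example entirely --- in fact your appeal to Theorem~\ref{MainTheorem} is redundant, since once $A=0$ you already have $-i_{\partial F}F\equiv 0$ without needing $T\equiv 0$, and hypothesis (a) is never used either. What the paper's approach buys is methodological uniformity (the same example settles both the Theorem and the Corollary) and robustness: it would handle other prescribed right-hand sides in (c) that happen to share the weight $(1,-4)$, where your scaling argument would no longer give a contradiction.
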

\begin{proof}
Let $E$ be the Maxwell tensor, as defined in Equation \ref{EMT}. Recall that $\div E= -i_{\partial F} F$, and so by the same arguments as above it suffices to find an example of a manifold equipped with a metric and a 2-form in which $\div E$ and the divergences of the 21 tensors found in the proof of Theorem \ref{MainTheorem} are $\RR$-linearly independent. The same example works: defining $X$, $g$ and $F$ as in the previous proof, the tensor $E$ and its divergence are expressed as
\begin{dgroup*}
\begin{dmath*} E=-\frac{1+3\,{x_{1}}^4\,x_{2}\,{x_{3}}^5\,x_{4}}{2\,x_{2}\,x_{3}\,x_{4}}\,\dxdx{1}{1}-\frac{1+3\,{x_{1}}^4\,x_{2}\,{x_{3}}^5\,x_{4}}{2\,x_{2}\,x_{3}\,x_{4}}\,\dxdx{2}{2}-\frac{3+{x_{1}}^4\,x_{2}\,{x_{3}}^5\,x_{4}}{2\,x_{2}\,x_{3}\,x_{4}}\,\dxdx{3}{3}-\frac{3+{x_{1}}^4\,x_{2}\,{x_{3}}^5\,x_{4}}{2}\,\dxdx{4}{4}-\frac{1+{x_{1}}^4\,x_{2}\,{x_{3}}^5\,x_{4}}{2}\,\dxdx{5}{5} \ , \end{dmath*}
\begin{dmath*}\text{div } E =\left(-6\,{x_{1}}^3\,{x_{3}}^4\right)\,\d x_{1}-\frac{-1+{x_{1}}^4\,x_{2}\,{x_{3}}^5\,x_{4}}{{x_{2}}^2\,x_{3}\,x_{4}}\,\d x_{2}-\frac{-1+2\,{x_{1}}^4\,x_{2}\,{x_{3}}^5\,x_{4}}{x_{2}\,{x_{3}}^2\,x_{4}}\,\d x_{3}+\frac{1}{x_{2}\,x_{3}\,{x_{4}}^2}\,\d x_{4} \, , \end{dmath*}
\end{dgroup*}
which is $\RR$-linearly independent with the divergences obtained in Theorem \ref{MainTheorem}.
\qed
\end{proof}

\begin{rem}
Although we have not required the 2-form $F$ to be closed, it is always the case that $\d F=0$, due to the Maxwell equations. The reader could then ask whether adding this condition changes in a meaningful way the previous results. 

The answer is that both results are still valid after supposing that $\d F=0$. Let us provide a brief overview on how the proof goes: observe that a variant of Theorem \ref{MainThmMetric} that computes natural tensors associated to metrics and closed 2-forms can be proven, utilizing the techniques developed in \cite{GMN_FEDOSOV} and adding the symmetry
$$F_{ij,k}+F_{jk,i}+F_{ki,j}=0$$
to the definition of the spaces $V_r$. 

Then, the proof of Theorem \ref{MainTheorem} stays the same, taking into account that $T_{17}$ and $T_{21}$  become constant multiples of $T_{16}$ and $T_{20}$ respectively, due to the new symmetry. The example $(X,g,\bar{F})$ suffices, where $X$ and $g$ are defined in the same way and $\bar{F}:=\left({x_{1}}^2\,{x_{3}}^2\right)\,\d x_{1}\,\wedge \, \d x_{2}-\frac{2{x_{1}}^3\,{x_{3}}}{3}\,\d x_{2}\,\wedge \, \d x_{3}+\d x_{3}\,\wedge \, \d x_{4}$.

In the example $(X,g,\bar{F})$, $\div E$ is still linearly independent of the rest of divergences, and so the proof of the corollary would remain essentially unchanged.
\end{rem}

\begin{rem}
The computations in the proofs above of the expressions of the tensors and divergences in the chosen manifold, as well as their linear independence, have been performed by the mathematical software MATLAB R2020a.
\end{rem}

\noindent \textbf{Acknowledgements.} The author thanks Professor J. Navarro-Garmendia for his helpful insights during the development of this work.

\end{document}